\newtheorem{theorem}{Theorem}
\newtheorem{remark}{Remark}
\newtheorem{corollary}{Corollary}[theorem]
\begin{document}

%\title{Outage Probability Analysis of RIS-aided MISO Systems}
\title{Characterization of Capacity and Outage of RIS-aided Downlink Systems under Rician Fading}
\author{Kali Krishna Kota, Praful D. Mankar, Harpreet S. Dhillon 
        % <-this % stops a space
\thanks{K.K. Kota and P. D. Mankar are with Signal Processing and Communication Research Center, IIIT Hyderabad, India. (Email: kali.kota@research.iiit.ac.in,   praful.mankar@iiit.ac.in). H. S. Dhillon is with Wireless@VT, Department of ECE, Virginia Tech, Blacksburg, VA (Email: hdhillon@vt.edu). The work of H. S. Dhillon was supported by the U.S. National Science Foundation under Grants ECCS-2030215 and CNS-2225511.}% <-this % stops a space
\thanks{}}

\maketitle

\begin{abstract} 
This letter presents optimal beamforming and outage analysis for a Reconfigurable Intelligent Surface (RIS)-aided multiple input single output downlink system under Rician fading on both the direct and the RIS-assisted indirect links. We focus on maximizing the capacity for two transmitter architectures: fully digital (FD) and fully analog (FA). This capacity maximization problem with optimally configured RIS is shown to be $L_1$ norm-maximization with respect to the transmit beamformer. To obtain the optimal FD beamformer, we propose a complex $L_1$-PCA-based algorithm whose complexity is significantly lower than the existing semi-definite relaxation-based solutions. We also propose a low-complexity optimal beamforming algorithm to obtain the FA beamformer solution.
Further, we derive analytical upper bounds on the SNR achievable by the proposed algorithms and utilize them to characterize the lower bounds on outage probabilities. 
The derived bounds are numerically shown to closely match the achievable performance for a low-rank channel matrix and are shown to be exact for a unit-rank channel matrix. %We further prove the bounds to be exact for a unit-rank channel matrix, i.e., when the LoS components of DL and IL are strong and aligned.

% {\color{blue} This letter presents optimal beamforming and outage analysis of a RIS-aided multiple input single output system under Rician faded direct link and the RIS-assisted indirect link. We  show that the capacity maximization with optimally configured RIS becomes a $L_1$ norm-maximization in terms of transmit beamformer. We propose a complex $L_1$ PCA-based algorithm to obtain the fully digital beamformer and another algorithm to obtain the fully analog beamformer with low complexity. Further, we derive accurate bounds on capacity and outage and show that they are close/exact under a low rank/unit rank propagation environment.}
%%100 word abstract
% {\color{blue} This letter presents optimal beamforming and outage analysis for a RIS-aided multiple input single output downlink system under Rician faded direct and RIS-assisted indirect links.  We show that the capacity maximization problem with optimally configured RIS becomes  $L_1$ norm-maximization in terms of transmit beamformers. We devised novel low complexity algorithms to obtain optimal beamformers for fully-digital and fully-analog transmitter archieture. Further, we derive bounds on capacity and outage and show that they are close/exact for a low rank/unit rank channel matrix.}
\end{abstract}
\begin{IEEEkeywords}
Reconfigurable intelligent surfaces, Beamforming, Outage probability, Moment generating function, L1 norm.
\end{IEEEkeywords}
\section{Introduction}
\IEEEPARstart{R}{econfigurable} intelligent surfaces (RISs) have attracted significant attention in recent years because of their ability to partially control the propagation environment and hence improve the performance of communications systems
% continue to be investigated because of their ability to improve propagation environments, which in turn can be utilized to improve spatial multiplexing and enhance coverage 
\cite{Wireless_through_RIS,RIS_principles&opp,CommModels_for_RIS}. 
%{\color{red}Despite the availability of extensive literature on the optimal design of RIS-aided communication systems, there is still a need to understand the capacity and outage performance of such systems in various fading environments and with transceiver architectures. } \\
%{\color{blue} Extensive literature on capacity and outage performance of such RIS-aided communication systems in various fading environments is available.}
% The performance-achieving beamforming and precoding algorithms for a variety of  RIS-aided communication system configurations have been extensively studied in the literature. For example, please refer to these excellent articles \cite{} and references therein. 
An extensive literature survey on the design of beamformers/precoders for RIS-aided multiple input single output/multiple input multiple output (MISO/MIMO) communication systems maximizing capacity is available in \cite{CAP1,CAP2,CAP3}.
% {\color{red}However, these works lack in terms of performance analysis of the proposed solutions, particularly for advanced settings, including multi-user and/or multiple antennas at the transmitter and receiver, because of analytical intractability. }
%%%
% However, these works lack in terms of performance analysis of the proposed solutions, particularly for multi antenna systems {\color{blue}and  non-zero mean channel} due to analytical intractability. Nonetheless, there are a few works on the characterization of capacity and outage performance under simplistic settings, which we discuss below.
%%%
A key shortcoming of the prior art in this direction is the lack of analytical performance characterization of the proposed solutions, especially for multi antenna systems, which is the main inspiration behind this paper. There are just a handful of works focusing on the characterization of capacity and outage performance, {\em albeit} under simplistic settings, which we discuss below.

%%%%%%%%%%%%%%%%%%%%%%%%%%%%%%%%%%%%%%%%%%%%%%%%%%%%%%%%%%%%%%%%%%%%%%%%%%%%%%%%%%%%%%%%%%%%%%%%%%%%%%%%%
%SISO
For RIS-aided single input single output (SISO) systems, the authors of \cite{RISvsRelay,P2,SISO_Capacity_Asymptotic,VanChien,PengXu,Tao,Chernoff_Saddlepoint} analyze the outage probability with/without the presence of a direct link (DL) between the transmitter and receiver in various fading environments. In particular, asymptotic outage probability is derived for Rayleigh fading in \cite{RISvsRelay} and \cite{P2} and for Rician fading in \cite{SISO_Capacity_Asymptotic} in the presence of only the RIS-assisted indirect link (IL), which is further used to analyze the diversity order and asymptotic symbol error rate. The authors of \cite{VanChien} derived a closed-form expression for the outage with Rayleigh fading along both DL and IL. Next, \cite{PengXu} extended the result presented in \cite{VanChien} for Rician fading along IL. 
For similar settings, \cite{Tao,Chernoff_Saddlepoint} derived upper bounds on the outage, and \cite{Chernoff_Saddlepoint} obtained the asymptotically exact outage in closed form.
%On the other hand, \cite{Chernoff_Saddlepoint} and \cite{Tao} derived upper bounds on outage for similar settings. Besides, \cite{Chernoff_Saddlepoint} also presented a closed-form solution of the asymptotic outage probability.  
%%%%%%%%%%%%%%%%%%%%%%%%%%%%%%%%%%%%%%%%%%%%%%%%%%%%%%%%%%%%%%%%%%%%%%%%%%%%%%%%%%%%%%%%%%%%%%%%%%%%%%%%%

%MISO
In addition, few works focused on outage analysis for RIS-aided MISO systems. 
For example,  \cite{Neel} and \cite{Guo} derived outage probabilities under various channel models for a maximum ratio transmission (MRT)-based transmit beamformer and optimally configured RIS phase shifts.
The authors of \cite{Neel} considered the line-of-sight (LoS)-Rayleigh channel model along IL, whereas \cite{Guo} considered LoS-Rician channel along IL and Rician channel along DL. 
%The authors of \cite{Neel} and \cite{Guo} employed maximum ratio transmission (MRT)-based transmit beamformer along with optimally configured RIS and derived outage probability for various channel models. 
% \cite{Neel,SlimAlouini_Letter,Guo,IIIT1,2024} present . 
% More specifically, \cite{Guo,Neel} adopt the maximum ratio transmission (MRT) scheme for the transmit beamformer in the presence of LoS channel between BS-RIS.     
%%%%%%%%%%%%%%%%%%%%%%%%%%%%%%%%%%%%%%%%%%%%%%%%%%%%%%%%%%%%%%%%%%%%%%%%%%%%%%%%%%%%%%%%%%%%%%%%%%%%%%%%%
%MIMO
%%%%%%%%%%%%%%%%%%%%%%%%%%%%%%%%%%%%%%%%%%%%%%%%%%%%%%%%%%%%%%%%%%%%%%%%%%%%%%%%%%%%%%%%%%%%%%%%%%%%%%%%%
%This paper is not for rank improvement as such. This paper proposes discrete RIS phase shift design for rank deficient channels
On the other hand, \cite{SlimAlouini_Letter} analyzed the capacity of the MRT-based transmit beamformer and optimally configured discrete RIS phase shifts for a millimeter wave channel.
% proposed an algorithm to obtain the optimal phase shift solution by exploiting low-rank structure and maximum ratio transmission (MRT) as the transmit beamforming solution that maximized the capacity of a millimeter wave channel.
% Statistical CSI 
%%%%%%%%%%%%%%%%%%%%%%%%%%%%%%%%%%%%%%%%%%%%%%%%%%%%%%%%%%%%%%%%%%%%%%%%%%%%%%%%%%%%%%%%%%%%%%%%%%%%%%%%% 
Next, \cite{IIIT1} presented statistically optimal beamforming for maximizing the ergodic capacity upper bound under Rician fading along both DL and IL and also characterized its outage performance. Finally, the ergodic capacity performance of a RIS-aided MIMO system with no DL is investigated in \cite{2024}. Particularly, they derived a closed-form expression of the channel gain distribution under Rayleigh-Rician fading along IL under full-rank and low-rank channel scenarios and utilized it to obtain ergodic capacity.
To summarize, the above-mentioned works focus on the outage/capacity characterization of RIS-aided (mostly SISO and MISO to some extent) systems under different fading scenarios. However, the analyses of MISO systems conveniently consider channel models that are tractable.
In addition, there is no consideration of the transmitter architecture type and its impact on capacity. Inspired by this, we attempt to bridge these gaps by considering Rician fading along both DL and IL with different transmitter architectures, namely fully digital (FD) and fully analog (FA). Our key contributions are listed below. 
%%Such an analysis is important while making hardware design choices that balance hardware complexity and performance tradeoffs. \\
% {\color{blue}
%     $\bullet$ 
%     We first show that the capacity maximization with optimally configured RIS becomes $L_1$ norm maximization problem with respect to the transmit beamforming vector.\\
%     $\bullet$ For FD architecture, we proposed an $L_1$-PCA for complex data problem based beamforming algorithm which has complexity of $\mathcal{O}(MN)$ that is significantly lower compared to the existing solutions in the literature.\\
%     $\bullet$ We derive an upper bound on the capacity for both FA and FD architectures based optimal beamformings. Next, this bound is shown to be achievable  when a strong LoS component is present along the IL.\\
%     $\bullet$ We also derive the  moment generating function (MGF) of the envelope of upper bounded SNR and numerically invert it to obtain a lower bound on outage probability. We further show that this lower bound is exact when the channel matrix is unit-rank, which indeed is the case when RIS is utilized to provide an virtual LoS for coverage enhancement. %This result is particularly important as RIS is envisioned to provide a virtual LoS channel with rank 1.   
% }
\begin{enumerate}
    \item We first show that the capacity maximization problem with optimally configured RIS becomes an $L_1$ norm maximization problem with respect to the transmit beamforming vector.
    \item For FD architecture, we propose an optimal beamforming algorithm based on complex $L_1$-PCA, which has significantly lower complexity compared to the existing semi-definite relaxation (SDR)-based solutions.
    \item We derive an upper bound on the capacity for both the architectures that is shown to be achievable in the presence of strong LoS along the IL and  absence of DL.
    \item In the absence of DL, we also derive the moment generating function (MGF) of the envelope of the upper bounded SNR and numerically invert it to obtain the outage probability lower bound. We further show that this lower bound is exact when the channel matrix is unit-rank, which is the case when RIS is utilized to provide a virtual LoS for coverage enhancement. 
\end{enumerate}
\section{System Model}\label{system_model} 
% This letter considers a RIS-aided MISO downlink system wherein the transmitter equipped with $M$ antennas sends information to a single antenna user via RIS comprising $N$ reflecting elements. We consider two types of transmitter configurations: 1) FD architecture, wherein each antenna is connected to the baseband processing unit (BBU) via a dedicated radio frequency (RF) chain, and 2) FA architecture, wherein all the antennas are connected to the BBU via a single RF chain. Furthermore, we consider the transmitter to efficiently utilize the {\em DL} and the RIS-assisted {\em IL} to improve the system capacity. We also assume both the links to have strong LoS and multi-path fading components. To model such scenarios, we consider Rician fading model along BS-RIS, RIS-user, and BS-user links given by
This letter considers a RIS-aided MISO downlink system with a single antenna user, wherein the transmitter is equipped with $M$ antennas, and RIS comprises of $N$ reflecting elements. We consider two types of transmitter configurations: 1) FD architecture, wherein each antenna is connected to the baseband processing unit (BBU) via a dedicated radio frequency (RF) chain, and 2) FA architecture, wherein all the antennas are connected to the BBU via a single RF chain. We also consider the transmitter to efficiently utilize the Rician faded {\em DL} and {\em IL}. Thus, BS-RIS, RIS-user, and BS-user links are modeled as 
$$ \mathbf{H} = \kappa_{l1}\mathbf{\bar{H}} + \kappa_{n1}\mathbf{\tilde{H}},~
    \mathbf{h}= \kappa_{l2}\mathbf{\bar{h}} + \kappa_{n2}\mathbf{\tilde{h}}\text{~and~}\mathbf{g}= \kappa_{lo}\mathbf{\bar{g}} + \kappa_{n0} \mathbf{\tilde{g}},$$
respectively, where $\kappa_{li} = \sqrt{\frac{K_i}{1+K_i}}, \kappa_{ni} = \sqrt{\frac{1}{1+K_i}}$. 
% {\color{red}Moreover, $K_o$, $K_1$, and $K_2$ denote the fading factors associated with the BS-user, BS-RIS, and RIS-user links, respectively, and  $\mathbf{\Bar{g}}$/$\mathbf{\tilde{g}}$,   $\mathbf{\Bar{h}}$/$\mathbf{\tilde{h}}$,   $\mathbf{\Bar{H}}$/$\mathbf{\tilde{H}}$ denote the LoS/multipath components associated with the BS-user, BS-RIS, and RIS-user links, respectively.}
Moreover, the tuple $\{K_o,\mathbf{\Bar{g}},\mathbf{\tilde{g}}\}$,   $\{K_1,\mathbf{\Bar{h}}, \mathbf{\tilde{h}}\}$,   $\{K_2,\mathbf{\Bar{H}}, \mathbf{\tilde{H}}\}$ consist of the fading factor, LoS component, and multipath component associated with the BS-user, BS-RIS, and RIS-user links, respectively.
Further, we model the LoS components as
\begin{align*}
\mathbf{\bar{g}}&=\mathbf{a}_M(\theta_{\rm{bd}}^d),~ \mathbf{\bar{h}}=\mathbf{a}_N(\theta_{\rm{rd}}),
~\text{and}~\mathbf{\bar{H}}=\mathbf{a}_N(\theta_{\rm ra})\mathbf{a}_M(\theta_{\rm bd}^i)^T,
\end{align*} where $\mathbf{a}_L(\psi)=[1 e^{-j\pi\frac{d}{\lambda}\cos(\psi)} \dots e^{-j\pi\frac{d}{\lambda}(L-1)\cos(\psi)}]^T$ is the linear array response, and $\theta_{\rm bd}^{d}$, $\theta_{\rm bd}^i$, $\theta_{\rm rd}$ are the angles of departure of the LoS components along the BS-RIS, RIS-BS, and RIS-user links, respectively and $\theta_{\rm ra}$ is the arrival angle at the RIS from BS.    
The multipath components are modeled as $\Tilde{\mathbf{g}} \sim \mathcal{CN}(0,\mathbf{I}_M)$, $\Tilde{\mathbf{h}} \sim \mathcal{CN}(0,\mathbf{I}_N)$, and $\Tilde{\mathbf{H}}_{:,i} \sim \mathcal{CN}(0,\mathbf{I}_N)$.

The received signal at the user is 
\begin{align}
    y = l(d_1,d_2) \mathbf{h}^T\mathbf{\Phi Hf}x + l(d_0) \mathbf{g}^T \mathbf{f}x +n,\label{RxSig} %\mathbf{g}^T\mathbf{f}x
\end{align}
where $x\in\mathbb{C}$ is the transmit symbol with $\mathbb{E}[xx^*] = P_s$, $P_s$ is the total power available at the transmitter, $\mathbf{f}\in\mathbb{C}^M$ is the transmit beamforming vector, $\mathbf{\Phi}={\rm diag}(\boldsymbol{\psi})$ is the RIS phase shift matrix, $l(d_1,d_2) = (d_1 d_2)^{-\alpha/2}, l(d_0) = d_0^{-\alpha/2}$ are the path loss models along IL and DL, $d_0, d_1, d_2$ are the distances between BS-user, BS-RIS, and RIS-user, respectively, $\alpha$ is the path loss exponent, and $n \sim \mathcal{CN}(0,\sigma_n^2)$ is the complex Gaussian noise. Further, the transmit beamforming vector belongs to 1) $ \mathcal{B}  = \{ \mathbf{f} \in \mathbb{C}^M :\|\mathbf{f}\|_2 = 1\}$ under FD architecture to limit power consumption beyond the total available power and 2) $\mathcal{L} = \{\mathbf{f} \in \mathbb{C}^M :|\mathbf{f}_m| = 1/\sqrt{M}\}$ under the FA architecture, implying the unit modulus constraint. Next, $\boldsymbol{\psi} \in \mathbb{C}^N$ is the RIS phase shift vector with a unit amplitude constraint, i.e., $|\boldsymbol{\psi}_k| = 1$ to satisfy the passive RIS assumption. 

For given $\mathbf{f}$ and $\mathbf{\Phi}$, receive signal-to-noise ratio (SNR) is %can be determined as 
\begin{equation}
    \Gamma(\mathbf{f},\mathbf{\Phi}) = \gamma |\mathbf{h}^T\mathbf{\Phi Hf} + \mu \mathbf{g}^T\mathbf{f}|^2, \label{SNR} % + \mathbf{g}^T\mathbf{f}
\end{equation}
where $\gamma=\frac{P_s (d_1 d_2)^{-\alpha}}{\sigma_n^2}$ and $\mu = (\frac{d_0}{d_1 d_2})^{-\alpha/2}$ is the path loss ratio. The capacity maximization problem to obtain the transmit beamformer $\mathbf{f}$ and RIS phase shift matrix $\mathbf{\Phi}$ is  
\begin{subequations}
\begin{align}
\max_{\mathbf{f},\mathbf{\Phi}} ~ &\log_2( 1 + \Gamma(\mathbf{f},\mathbf{\Phi}) ), \label{objective} \\
{\rm s.t.}~& \mathbf{f} \in \begin{cases}
     \mathcal{B}~;~~~\rm{FD~architecture} \\  \mathcal{L}~;~~~\rm{FA~architecture} 
\end{cases}, \label{constraint_f}\\
& |\boldsymbol{\psi}_k|=1;~\forall k=1,\dots,N,\label{constraint_phi}%
\end{align}\label{optimization_problem}%
\end{subequations} 
where \eqref{constraint_f} and \eqref{constraint_phi} represent the constraints on the transmit beamforming (FA/FD) and RIS phase shift vector, respectively. 
% For the sake of easy referencing, we will refer to the jointly optimal transmit beamforming and RIS phase shift solution as optimal beamforming. 
To evaluate the performance of the beamforming and RIS phase shift solution under both the architectures, we evaluate the {\em outage probability} defined as the probability that the received SNR is below threshold $\beta$ and is given by 
\begin{align}
{\rm P}_{\rm out}=\mathbb{P}[\Gamma(\mathbf{f}_{\rm opt},\boldsymbol{\psi}_{\rm opt})<\beta].\label{outage_prob}
\end{align}
\section{Optimal Beamforming}
\label{sec:optimal_beamforming}
In this section, we present the joint transmit beamformer and RIS phase shift matrix solution that maximizes the capacity \eqref{optimization_problem}. In particular, we focus on obtaining the maximum SNR in closed form under perfect CSI assumption.% for a given channel
\subsection{Optimal RIS phase shifts}
In this subsection, we obtain the optimal RIS phase shift matrix that maximizes the capacity \eqref{objective} for a given transmit beamforming vector $\mathbf{f}$. 
Since logarithm is a monotonically increasing function, the capacity maximization is equivalent to maximizing the  SNR. 
For this, we  rewrite  \eqref{SNR} as
\begin{align}
\Gamma(\mathbf{f},\boldsymbol{\psi}) = \gamma |\boldsymbol{\psi}^T\mathbf{Ef} + \mu \mathbf{g}^T\mathbf{f}|^2,\label{SNR_RIS_SubProb}
\end{align}
where $\mathbf{E}={\rm diag}(\mathbf{h})\mathbf{H}$. 
The received SNR can be maximized by co-phasing the fading coefficients using $\boldsymbol{\psi}$ to maximize the magnitude. Thus, for a given $\mathbf{f}$, the optimal RIS phase shift is
\begin{align}
    \boldsymbol{\psi}^{\rm opt} = \rm{exp}(-\angle{\mathbf{Ef}}+\angle{\mathbf{g}^T\mathbf{f}}). \label{Psi_opt}
\end{align}
Let $\mathbf{G} = \left[\begin{matrix}
    \mathbf{E} ~ \mu \mathbf{g}^T 
\end{matrix}\right]^T$ be a matrix concatenated with DL and IL channel matrices. Substituting \eqref{Psi_opt}, we can rewrite  \eqref{SNR_RIS_SubProb} as  
\begin{align}
\Gamma(\mathbf{f},\boldsymbol{\psi}^{\rm opt}) = ( \|\mathbf{Ef}\|_1 + \mu\|\mathbf{g}\mathbf{f}\|_1 )^2 = \|\mathbf{Gf}\|_1^2. \label{SNR_L1}
\end{align}
\subsection{Optimal Transmit Beamforming}
Using \eqref{SNR_L1}, it can be seen that the capacity maximization problem \eqref{optimization_problem} with optimal $\boldsymbol{\psi}^{\textrm{opt}}$ reduces to the selection of the transmit beamforming vector $\mathbf{f}$ that maximizes the $L_1$ norm of $\mathbf{Gf}$.
We solve this problem for FD and FA beamforming solutions in the following subsections.
\subsubsection{Digital Beamformer}
The $L_1$ norm maximization problem \eqref{SNR_L1} to obtain the optimal transmit beamformer for FD architecture is defined as
\begin{align}
\max_{\|\mathbf{f}\|_2 = 1} \|\mathbf{Gf}\|_1. \label{optimization_f_DB}
\end{align}\textit{In essence, the transmit beamformer problem has now effectively reduced to estimating $L_1$ principal components of the concatenated channel matrix $\mathbf{G} = [({\rm diag}(\mathbf{h})\mathbf{H})^T \mu \mathbf{g}]^T$.}
%{\color{blue}Using the technique from \cite{L1Complex_PCA} that reformulates the original $L_1$ norm maximization problem into two independent maximization problems, we propose an iterative solution for obtaining the optimal beamformer as follows. }
In \cite{L1Complex_PCA}, a novel approach is proposed that breaks such a $L_1$-complex PCA problem into two independent tractable optimization problems that can be solved iteratively. 
The $L_1$ norm of $\mathbf{Gf}$ can be represented from \cite{L1Complex_PCA} as 
\begin{align*}
    \|\mathbf{Gf}\|_1=\max_{\mathbf{u} \in \mathcal{U}^{N+1}} \operatorname{Re}\{\mathbf{u}^H \mathbf{Gf}\},
\end{align*}
where $\mathcal{U}^{N+1} \overset{\Delta}{=} \{\mathbf{u} \in \mathbb{C}^{N+1} : |\mathbf{u}_i| = 1; \forall i\}$ is a unimodular vector space and $\mathbf{u}^{\textrm{opt}}=\exp(-\angle\mathbf{Gf})$. 
Thus,  \eqref{optimization_f_DB} becomes
\begin{align}
 \max_{\|\mathbf{f}\|_2 = 1} \|\mathbf{Gf}\|_1 &= \max_{\|\mathbf{f}\|_2 = 1} \max_{\mathbf{u} \in \mathcal{U}^{N+1}} \operatorname{Re}\{\mathbf{u}^H \mathbf{Gf}\}, \nonumber \\%\label{Reformoptimization1_f_DB}
 &\stackrel{(a)}{=} \max_{\mathbf{u} \in \mathcal{U}^{N+1}} \max_{\|\mathbf{f}\|_2 = 1} \operatorname{Re}\{\mathbf{u}^H \mathbf{Gf}\}, \label{Reformoptimization2_f_DB}
\end{align}
where Step (a) follows from the fact that the maximization operators can be exchanged for the linear objective.
For a fixed $\mathbf{u}$, the optimal transmit beamformer becomes
\begin{align}
    \mathbf{f}^{\rm{opt}} = \frac{\mathbf{G}^H\mathbf{u}}{\|\mathbf{G}^H\mathbf{u}\|}. \label{DB_optimal_f}
    %\mathbf{f}^{\rm{opt}} = {\mathbf{G}^H\mathbf{u}}/{\|\mathbf{G}^H\mathbf{u}\|}. \label{DB_optimal_f}
\end{align}
Thus,  RHS of \eqref{Reformoptimization2_f_DB} becomes $\max_{\mathbf{u} \in \mathcal{U}^{N+1}} \operatorname{Re}\{\mathbf{u}^H \mathbf{G}\mathbf{f}^{\rm{opt}}\}$  which is maximized by $\mathbf{u}$ as 
\begin{align}
    \mathbf{u}^{\rm{opt}} = \exp\left(-\angle\mathbf{G}\mathbf{f}^{\rm{opt}}\right).\label{DB_optimal_u}
\end{align}
Finally, the optimal beamformer $\mathbf{f}^{\rm opt}$ can be obtained by evaluating \eqref{DB_optimal_f} and \eqref{DB_optimal_u} iteratively untill \eqref{optimization_f_DB} converges as summarized in Algorithm \ref{Alg_DB}. Additionally, it is worth noting that the algorithm has a complexity of $\mathcal{O}(NM)$ per iteration, which is due to the evaluation of $\mathbf{f}$ in Step 1. 
% \begin{frame}{}
    \begin{algorithm}\label{Alg_DB}
  % \algsetup{linenosize=\Big}
  % \scriptsize
% \SetKwComment{Comment}{$\triangleright$\ }{}
\KwInput{$\mathbf{g}$, $\mathbf{E}$ and $\mathbf{G}$.}
\KwInit{$\mathbf{u}$ , $\mathbf{f}$}
\SetKwRepeat{Repeat}{Repeat}{Untill: $\operatorname{Re}\{\mathbf{u}^H \mathbf{Gf}\}$ converges}
\Repeat{}{
$\mathbf{f} = \frac{\mathbf{G}^H\mathbf{u}}{\|\mathbf{G}^H\mathbf{u}\|}$,\\
$\mathbf{u} = \rm{exp}(\angle{\mathbf{G}\mathbf{f}})$,}
$\mathbf{f}^{\rm{opt}}=\mathbf{f}$ and $\boldsymbol{\psi}^{\rm opt} = \rm{exp}(-\angle{\mathbf{Ef}^{\rm{opt}}}+\angle{\mathbf{g}^T\mathbf{f}^{\rm{opt}}}).$
\caption{\small{Digital Beamforming Algorithm}}
\end{algorithm}
% \end{frame}
% \begin{remark}
%    By reducing the SNR into the L1-norm form, we have reduced the computational complexity of solving the capacity maximization problem from $\mathcal{O}(N^{9/2}\log(1/\epsilon))$ to $\mathcal{O}(MN)$, where the high complexity arises due to the use of SDR technique with a solution accuracy of $\epsilon$. It is important to note here that the SDR technique has been extensively used throughout the literature.  
% \end{remark}
\begin{remark}
    It is worth noting that the existing optimal beamforming solutions for FD architecture rely on SDR for selecting the RIS phase shift matrix, which has a computational complexity of $\mathcal{O}(N^{9/2}\log(1/\epsilon))$ \cite{SDR}. However, by reducing the SNR into the $L_1$-norm form, we have reduced the computational complexity of the optimal beamforming problem to $\mathcal{O}(MN)$, which is much lower than SDR-based solutions. 
\end{remark}
\subsubsection{Analog Beamformer}
The transmit beamformer under FA architecture is defined as 
\begin{align}
\max_{|\mathbf{f}_m|=\frac{1}{\sqrt{M}}} ~ &|(\boldsymbol{\psi}^T\mathbf{E} + \mu \mathbf{g}^T)\mathbf{f}|^2.\label{optimization_f_AB}
\end{align}
Consequently,  $\mathbf{f}$ that maximizes \eqref{optimization_f_AB} for the given $\boldsymbol{\psi}$  is
\begin{align}
    \mathbf{f}^{\rm{opt}} = e^{-j(\angle{\boldsymbol{\psi}^T\mathbf{E}} - \angle{\mathbf{g}})}/\sqrt{M}.\label{AB_optimal_f}
\end{align}
However, the obtained closed-form expressions for $\mathbf{f}$ and $\boldsymbol{\psi}$, given in \eqref{Psi_opt} and \eqref{AB_optimal_f}, are dependent on each other. Hence, these solutions are iteratively evaluated until \eqref{optimization_f_AB} converges as summarized in Algorithm \ref{Alg_AB}. Algorithm \ref{Alg_AB} has a complexity of $\mathcal{O}(MN)$ per iteration, arising due to matrix multiplication in Steps 1 and 2.
% \begin{frame}{}
\begin{algorithm}\label{Alg_AB}
  % \algsetup{linenosize=\small}
  % \scriptsize
\KwInput{$\mathbf{g}$ and $\mathbf{E}$.}
\KwInit{$\mathbf{u}$ , $\mathbf{f}$}
\SetKwRepeat{Repeat}{Repeat}{Untill: $\Gamma(\mathbf{f},\mathbf{\Phi})$ converges}
\Repeat{}{
$\boldsymbol{\psi} = e^{-j(\angle{\mathbf{Ef}} - \angle{\mathbf{g}})}$,\\

$\mathbf{f} = \frac{1}{\sqrt{M}}e^{-j(\angle{\boldsymbol{\psi}^T\mathbf{E}} - \angle{\mathbf{g}})}$,}
$\mathbf{f}^{\rm{opt}}=\mathbf{f}$ and $\boldsymbol{\psi}^{\rm opt} = \boldsymbol{\psi}.$
\caption{\small{Analog Beamforming Algorithm}}
\end{algorithm}
% \end{frame}
The performance characterization of such a RIS-aided system with FA architecture has not been investigated so far. Motivated by this, we study the capacity and outage performance as follows.
We begin by simplifying the maximum achievable SNR for FA architecture using \eqref{SNR_L1} as
\begin{align}
\max_{|\mathbf{f}_m| =\frac{1}{\sqrt{M}} } \|\mathbf{Gf}\|_1^2. \label{optimization_f_AB_1}
\end{align}
The above objective function  can be upper-bounded as 
\begin{align*}
\|\mathbf{Gf}\|_1 &= \sum\nolimits_{n = 1}^N \left|\sum\nolimits_{m = 1}^M \mathbf{G}_{nm}\mathbf{f}_m\right|,\\
&\stackrel{(a)}{\leq} \sum\nolimits_{n = 1}^N \sum\nolimits_{m = 1}^M \frac{| \mathbf{G}_{nm}|}{\sqrt{M}}, \\
&= \frac{\|\mathbf{G}\|_{1,1}}{\sqrt{M}}, 
\end{align*}
where $\|\cdot\|_{1,1}$ denotes the $L_{1,1}$ norm. 
Here, Step (a) follows from the triangle inequality ($|a+b|\leq |a|+|b|$) and the analog beamforming constraint $|\mathbf{f}_m|=\frac{1}{\sqrt{M}}$. We summarize this upper bound in the following theorem. 
\begin{theorem}
\label{thm_SNRmax}
The maximum capacity of a RIS-aided MISO communication system with FA architecture is upper bounded by $\log_2(1+\Gamma^{\rm UB})=\log_2(1+\frac{\gamma}{M}\|\mathbf{G}\|^2_{1,1})$.  %$\frac{1}{M}\|{\rm diag}(\mathbf{h})\mathbf{H}\|^2_1$. 
\end{theorem}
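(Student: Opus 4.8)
The plan is to chain together three ingredients, all of which are already available: the $L_1$-norm reformulation of the optimally-phased SNR in \eqref{SNR_L1}, the elementary triangle-inequality bound sketched just before the theorem, and the monotonicity of the rate function $x \mapsto \log_2(1+x)$. No new machinery is needed; the proof is a careful bookkeeping of the maximization order.

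First, I would record the reduction of the joint problem. For any feasible $\mathbf{f}$, the phase-shift vector $\boldsymbol{\psi}^{\rm opt}$ in \eqref{Psi_opt} is the exact maximizer of $\Gamma(\mathbf{f},\boldsymbol{\psi})$, and substituting it gives $\Gamma(\mathbf{f},\boldsymbol{\psi}^{\rm opt}) = \gamma\|\mathbf{Gf}\|_1^2$, where $\mathbf{G} = [\mathbf{E}^T \ \mu\mathbf{g}]^T$ does not depend on $\boldsymbol{\psi}$. Hence the inner maximization over $\boldsymbol{\psi}$ in \eqref{optimization_problem} can be carried out in closed form, and for the FA architecture the problem collapses to $\max_{|\mathbf{f}_m| = 1/\sqrt{M}} \gamma\|\mathbf{Gf}\|_1^2$, i.e.\ $\gamma$ times \eqref{optimization_f_AB_1}.

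Second, I would bound $\|\mathbf{Gf}\|_1$ pointwise over the FA feasible set. Writing $\|\mathbf{Gf}\|_1 = \sum_{n} \bigl|\sum_m \mathbf{G}_{nm}\mathbf{f}_m\bigr|$ and applying the triangle inequality together with the unit-modulus constraint $|\mathbf{f}_m| = 1/\sqrt{M}$ gives $\|\mathbf{Gf}\|_1 \le \tfrac{1}{\sqrt{M}}\sum_n\sum_m|\mathbf{G}_{nm}| = \tfrac{1}{\sqrt{M}}\|\mathbf{G}\|_{1,1}$. Squaring and multiplying by $\gamma$ yields $\Gamma(\mathbf{f},\boldsymbol{\psi}^{\rm opt}) \le \tfrac{\gamma}{M}\|\mathbf{G}\|_{1,1}^2 = \Gamma^{\rm UB}$ uniformly over all admissible $\mathbf{f}$. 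Since $\log_2(1+\cdot)$ is increasing, taking the maximum over the feasible set on the left preserves the inequality and gives the claimed bound $\max \log_2(1+\Gamma) \le \log_2\bigl(1+\tfrac{\gamma}{M}\|\mathbf{G}\|_{1,1}^2\bigr)$.

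The only step deserving care — and the closest thing to an obstacle — is the first one: one must confirm that collapsing the inner maximization over $\boldsymbol{\psi}$ before the outer one over $\mathbf{f}$ loses nothing, which holds precisely because \eqref{Psi_opt} is the exact maximizer for every fixed $\mathbf{f}$, so $\|\mathbf{Gf}\|_1^2$ is genuinely the value function of the inner subproblem rather than a lower bound on it. Once that is established, the remainder is a one-line application of the triangle inequality and monotonicity; I would also note in passing that the achievability of this bound (strong LoS along the IL, absent DL) is deferred to the subsequent analysis and is not part of this statement.
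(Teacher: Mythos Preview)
Your proposal is correct and follows the same route as the paper: reduce the joint problem to $\gamma\|\mathbf{Gf}\|_1^2$ via the optimal $\boldsymbol{\psi}$ in \eqref{Psi_opt}--\eqref{SNR_L1}, then apply the triangle inequality together with the FA constraint $|\mathbf{f}_m|=1/\sqrt{M}$ to obtain $\|\mathbf{Gf}\|_1\le \|\mathbf{G}\|_{1,1}/\sqrt{M}$, and conclude by monotonicity of $\log_2(1+\cdot)$. Your extra remark on why the inner maximization over $\boldsymbol{\psi}$ may be collapsed first is a welcome clarification but does not depart from the paper's argument.
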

\begin{corollary}\label{cor_SNRmax}
  The maximum capacity upper bound of the RIS-aided MISO downlink given in Theorem \ref{thm_SNRmax} under both FA and FD architectures reduces to $\log_2(1+\frac{\gamma}{M}\|\mathbf{\bar{G}}\|^2_1)=\log_2(1+( N+\mu)^2M\gamma)$ for an LoS-channel.
\end{corollary}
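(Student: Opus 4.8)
The plan is to specialize every link to its pure line-of-sight realization, i.e.\ to let $K_o,K_1,K_2\to\infty$ so that $\kappa_{ni}\to 0$, substitute the resulting deterministic channel matrices into the bound of Theorem~\ref{thm_SNRmax}, and exploit that every entry of a ULA steering vector $\mathbf{a}_L(\psi)$ is a pure phase term and hence has unit modulus. Under this assumption $\mathbf{g}=\mathbf{\bar{g}}=\mathbf{a}_M(\theta_{\rm bd}^d)$, $\mathbf{h}=\mathbf{\bar{h}}=\mathbf{a}_N(\theta_{\rm rd})$ and $\mathbf{H}=\mathbf{\bar{H}}=\mathbf{a}_N(\theta_{\rm ra})\mathbf{a}_M(\theta_{\rm bd}^i)^T$, so $\mathbf{E}=\mathrm{diag}(\mathbf{h})\mathbf{H}$ becomes the rank-one matrix $\mathbf{\bar{E}}=\mathbf{\bar{e}}\,\mathbf{a}_M(\theta_{\rm bd}^i)^T$ with $\mathbf{\bar{e}}=\mathrm{diag}(\mathbf{a}_N(\theta_{\rm rd}))\mathbf{a}_N(\theta_{\rm ra})$ unit-modulus, and $\mathbf{\bar{G}}=[\mathbf{\bar{E}}^T~\mu\mathbf{a}_M(\theta_{\rm bd}^d)]^T$.

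First I would evaluate $\|\mathbf{\bar{G}}\|_{1,1}$ entry by entry. Each of the first $N$ rows satisfies $|\mathbf{\bar{E}}_{nm}|=\big|[\mathbf{a}_N(\theta_{\rm rd})]_n\big|\,\big|[\mathbf{a}_N(\theta_{\rm ra})]_n\big|\,\big|[\mathbf{a}_M(\theta_{\rm bd}^i)]_m\big|=1$, contributing $NM$ in total, while the last row satisfies $\big|\mu[\mathbf{a}_M(\theta_{\rm bd}^d)]_m\big|=\mu$, contributing $\mu M$. Hence $\|\mathbf{\bar{G}}\|_{1,1}=(N+\mu)M$, and Theorem~\ref{thm_SNRmax} yields $\Gamma^{\rm UB}=\frac{\gamma}{M}\|\mathbf{\bar{G}}\|_{1,1}^2=\frac{\gamma}{M}(N+\mu)^2M^2=(N+\mu)^2M\gamma$, so that $\log_2(1+\Gamma^{\rm UB})=\log_2(1+(N+\mu)^2M\gamma)$ for the FA architecture, the $\|\mathbf{\bar{G}}\|_1$ appearing in the statement being read as the entrywise norm $\|\mathbf{\bar{G}}\|_{1,1}$.

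It remains to show that the same value bounds the FD capacity, and I expect this to be the only step with real content. Starting from $\Gamma(\mathbf{f},\boldsymbol{\psi}^{\rm opt})=\gamma\|\mathbf{Gf}\|_1^2=\gamma\big(\sum_{n}|\mathbf{G}_{n,:}\mathbf{f}|\big)^2$, for a unit-norm FD beamformer the natural per-row estimate is Cauchy--Schwarz, $|\mathbf{G}_{n,:}\mathbf{f}|\le\|\mathbf{G}_{n,:}\|_2\|\mathbf{f}\|_2=\|\mathbf{G}_{n,:}\|_2$, which gives the FD counterpart $\Gamma\le\gamma\big(\sum_{n}\|\mathbf{G}_{n,:}\|_2\big)^2$. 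Because every row of $\mathbf{\bar{G}}$ is constant-modulus, this Cauchy--Schwarz step is tight, $\|\mathbf{\bar{G}}_{n,:}\|_2=\frac{1}{\sqrt{M}}\|\mathbf{\bar{G}}_{n,:}\|_1$, whence $\sum_{n}\|\mathbf{\bar{G}}_{n,:}\|_2=\frac{1}{\sqrt{M}}\|\mathbf{\bar{G}}\|_{1,1}=(N+\mu)\sqrt{M}$ and the FD bound also collapses to $\gamma(N+\mu)^2M$. Thus the constant-modulus structure of the steering vectors makes the triangle inequality (used for FA in Theorem~\ref{thm_SNRmax}) and Cauchy--Schwarz (used for FD) tight in the same way, so the two a priori different bounds coincide and reduce to $\log_2(1+(N+\mu)^2M\gamma)$. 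Recognizing this shared tightness mechanism, rather than any computation, is the main obstacle.
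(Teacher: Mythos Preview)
Your argument is correct, but it follows a different line from the paper's. You evaluate the bound of Theorem~\ref{thm_SNRmax} directly by counting entry moduli to get $\|\mathbf{\bar G}\|_{1,1}=(N+\mu)M$, and for the FD case you introduce a separate row-wise Cauchy--Schwarz bound $\Gamma\le\gamma\bigl(\sum_n\|\mathbf{G}_{n,:}\|_2\bigr)^2$ and observe that, because every row of $\mathbf{\bar G}$ is constant-modulus, $\|\mathbf{\bar G}_{n,:}\|_2=\tfrac{1}{\sqrt{M}}\|\mathbf{\bar G}_{n,:}\|_1$, so this bound collapses to the same value $(N+\mu)^2M\gamma$. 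The paper instead exploits a rank-one column structure of $\mathbf{\bar G}$, writing $\mathbf{\bar G}_{nm}=\mathbf{\bar G}_{n1}c_m$ with $|c_m|=1$, and exhibits the explicit beamformer $\mathbf{f}_m=c_m^*/\sqrt{M}$ that simultaneously satisfies the FA and FD constraints and \emph{achieves} $\|\mathbf{\bar G f}\|_1=(N+\mu)\sqrt{M}$. The paper's route therefore proves a bit more (attainment, not just the bound value), but its ``columns are linearly dependent'' claim tacitly requires the DL and IL departure directions to satisfy $\mathbf{a}_M(\theta_{\rm bd}^d)\propto\mathbf{a}_M(\theta_{\rm bd}^i)$; your entrywise/Cauchy--Schwarz route avoids that alignment assumption and still establishes the stated corollary. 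One wording nit: when you say the Cauchy--Schwarz step is ``tight,'' what you actually use is the norm identity $\|\cdot\|_2=\tfrac{1}{\sqrt{M}}\|\cdot\|_1$ on constant-modulus rows, which makes the FD and FA \emph{bound values} coincide; per-row Cauchy--Schwarz is not simultaneously tight across rows unless $\mathbf{\bar G}$ is rank one.
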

\begin{IEEEproof}
    In the presence of an LoS channel, we have $\mathbf{g} = \kappa_l \mathbf{\bar{g}}$, $\mathbf{h} = \kappa_l \mathbf{\bar{h}}$, and $\mathbf{H} = \kappa_l \mathbf{\bar{H}}$, corresponding to  $K_o=K_1=K_2=\infty$ in Rician channel model discussed in Section \ref{system_model}.
    Thus, matrix $\mathbf{G}$ becomes equal to $\mathbf{\bar{G}}=[(\rm{diag}(\mathbf{\bar{h}})\mathbf{\bar{H}})^T~ \mu \mathbf{\bar{g}}]^T$.
    For such a case, the channel envelope with optimal RIS phase shift $\boldsymbol{\psi}_{\rm opt}$ \eqref{SNR_L1}, and a given FA beamformer $\mathbf{f}$ is simplified as 
    \begin{align}
    \|\mathbf{\bar{G}f}\|_1&=\sum\nolimits_{n=1}^{N+1}\big|\sum\nolimits_{m=1}^M \mathbf{\bar{G}}_{nm}\mathbf{f}_m\big|,\nonumber\\
        &\stackrel{(a)}{=}\sum\nolimits_{n=1}^{N+1}\big|\sum\nolimits_{m=1}^M \mathbf{\bar{G}}_{n1}c_m\mathbf{f}_m\big|,\nonumber\\
        &\stackrel{(b)}{=}\frac{1}{\sqrt{M}}\sum\nolimits_{n=1}^{N+1}\sum\nolimits_{m=1}^M| \mathbf{\bar{G}}_{n1}|,\nonumber\\
        &\stackrel{(c)}{=}( N + \mu)\sqrt{M}, \label{LoS_SNRmax}
    \end{align}   
    where Step (a) follows from the fact that the columns of matrix $\mathbf{\bar{G}}$ are linearly dependent, Step (b) follows from the unit modulus constraint on the analog beamformer $\mathbf{f}$ such that it maximizes the sum by setting $\mathbf{f}_m=\frac{1}{\sqrt{M}}c_m^H$ and the fact that $|c_m|=1$, and Step (c) follows from $|\mathbf{\bar{G}}_{n1}|=1$. 

    Moreover, the equality presented in \eqref{LoS_SNRmax} also holds for digital beamforming. This is because the inner summation given in Step (b)  is also maximized for the digital beamformer $\mathbf{f}_m=\frac{1}{\sqrt{M}}c_m^H$ such that $\|\mathbf{f}\|^2=1$.
\end{IEEEproof}
\begin{corollary}\label{cor_SNRmax2}
 The upper bound given in Corollary \ref{cor_SNRmax} is achievable by both FD and FA architectures in the absence of DL under dominant LoS and is given by $\log_2(1+\gamma N^2M)$.     
\end{corollary}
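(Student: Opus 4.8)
The plan is to obtain Corollary~\ref{cor_SNRmax2} as a specialization of Corollary~\ref{cor_SNRmax} to the no-DL regime, in three short steps. First I would argue that removing the DL amounts to setting the path-loss ratio $\mu=0$ (equivalently $d_0\to\infty$), so that the concatenated matrix $\mathbf{G}$ appearing in \eqref{SNR_L1} loses its last row and collapses to $\mathbf{E}=\mathrm{diag}(\mathbf{h})\mathbf{H}$; under dominant LoS ($K_1,K_2\to\infty$) this becomes the deterministic rank-one matrix $\bar{\mathbf{E}}=\mathrm{diag}(\bar{\mathbf{h}})\bar{\mathbf{H}}=\mathbf{v}\,\mathbf{a}_M(\theta_{\rm bd}^i)^T$, where $\mathbf{v}\triangleq\mathrm{diag}(\mathbf{a}_N(\theta_{\rm rd}))\,\mathbf{a}_N(\theta_{\rm ra})$ has unit-modulus entries, $|\mathbf{v}_n|=1$.

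Second, I would reuse the computation already carried out in \eqref{LoS_SNRmax} with $\mu$ set to $0$: since every entry of $\bar{\mathbf{E}}$ has unit modulus and its columns are scalar multiples of one another, the envelope factorizes as $\|\bar{\mathbf{E}}\mathbf{f}\|_1=N\,|\mathbf{a}_M(\theta_{\rm bd}^i)^T\mathbf{f}|$, so the beamforming problem reduces to maximizing $|\mathbf{a}_M(\theta_{\rm bd}^i)^T\mathbf{f}|$. Under the FD constraint $\|\mathbf{f}\|_2=1$ this is maximized, by Cauchy--Schwarz, by the MRT vector $\mathbf{f}=\mathbf{a}_M(\theta_{\rm bd}^i)^{*}/\sqrt{M}$, yielding $|\mathbf{a}_M(\theta_{\rm bd}^i)^T\mathbf{f}|=\sqrt{M}$; under the FA constraint $|\mathbf{f}_m|=1/\sqrt{M}$ the same value $\sqrt{M}$ is attained by the co-phasing vector, since the triangle inequality gives $|\mathbf{a}_M(\theta_{\rm bd}^i)^T\mathbf{f}|\le\sum_m|\mathbf{f}_m|=\sqrt{M}$ with equality under co-phasing. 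These are precisely the fixed points that Algorithms~\ref{Alg_DB} and~\ref{Alg_AB} converge to in this rank-one case, so the bound is genuinely achieved by the proposed algorithms, not merely by an abstract optimum.

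Third, I would substitute back into \eqref{SNR_L1} to get $\Gamma^{\max}=\gamma\,\|\bar{\mathbf{E}}\mathbf{f}^{\rm opt}\|_1^2=\gamma N^2 M$, hence the achievable capacity is $\log_2(1+\gamma N^2 M)$; comparing this with the expression of Corollary~\ref{cor_SNRmax} evaluated at $\mu=0$, namely $\log_2(1+(N+\mu)^2 M\gamma)\big|_{\mu=0}$, shows the two coincide, which is exactly the achievability claim. I do not anticipate a genuine obstacle here, since the statement is in essence Corollary~\ref{cor_SNRmax} with $\mu=0$; the only point that deserves a line of care is confirming that the alternating iterations of Algorithms~\ref{Alg_DB} and~\ref{Alg_AB} do not stall at a suboptimal fixed point, which holds because for the rank-one $\bar{\mathbf{E}}$ the $L_1$-maximization is effectively the unimodal MRT problem solved in the second step.
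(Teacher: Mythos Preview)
Your proposal is correct and follows essentially the same route as the paper: both arguments specialize to the no-DL, dominant-LoS setting, exploit the rank-one factorization $\bar{\mathbf{E}}=\mathbf{v}\,\mathbf{a}_M(\theta_{\rm bd}^i)^T$ to write $\|\bar{\mathbf{E}}\mathbf{f}\|_1=N\,|\mathbf{a}_M(\theta_{\rm bd}^i)^T\mathbf{f}|$, and then exhibit the explicit FD and FA beamformers that attain $\sqrt{M}$, yielding $\Gamma=\gamma N^2M$. Your added remark about the algorithms' fixed points is a small bonus beyond what the paper records, but the core proof is the same.
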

\begin{proof}
Let $\mathbf{\Bar{E}} = \textrm{diag}(\Bar{\mathbf{h}})\Bar{\mathbf{H}}$. Thus, the SNR given in \eqref{SNR_L1} in the absence of DL under dominant LoS channel becomes $$\gamma\|\mathbf{\Bar{E}f}\|_1^2  = \gamma \|\textrm{diag}(\mathbf{a}_N(\theta_{\rm{rd}}))\mathbf{a}_N(\theta_{\rm ra})\|_1^2 |\mathbf{a}_M(\theta_{\rm bd}^i)^T\mathbf{f}|^2.$$
The optimal choice of $\mathbf{f}$ under the FA and FD architectures is $\mathbf{f}^{\textrm{opt}}_{\textrm{AB}} = e^{-j\angle{\mathbf{a}_M(\theta_{\rm bd}^i)}} / \sqrt{M}$ and $\mathbf{f}^{\textrm{opt}}_{\textrm{DB}} = \mathbf{a}^H_M(\theta_{\rm bd}^i)/\|\mathbf{a}^H_M(\theta_{\rm bd}^i)\|$, respectively. For these optimal choices, the SNR becomes $$\gamma\|\mathbf{\bar{E}f}^{\textrm{opt}}\|_1^2= \gamma N^2M,~~~\text{for}~\mathbf{f}^{\textrm{opt}}\in\{\mathbf{f}^{\textrm{opt}}_{\textrm{AB}},\mathbf{f}^{\textrm{opt}}_{\textrm{DB}}\}.$$ Thus, it can be seen that the upper bound proposed in Corollary \ref{cor_SNRmax} is achievable by both the architectures in a LoS dominated scenario without DL.
\end{proof}
\begin{remark}\label{remark2}
    From Corollary \ref{cor_SNRmax2}, it can be safely deduced that the capacity upper bound presented in Theorem \ref{thm_SNRmax} becomes tight in the absence of DL (i.e., $\mathbf{G} = \mathbf{E}$) under scenarios including strong LoS components (i.e., for a large $K_i$). In other words, the capacity upper bound in the absence of DL becomes tighter as the rank of $\mathbf{E}$ becomes low, which further reduces to equality when $\mathbf{E}$ becomes unit rank, i.e., $K_i\to\infty$. Moreover, the capacity upper bound in the presence of DL can also be achieved when the angle between DL and IL is very small and $K_i$ is large (in which case $\mathbf{G}$ becomes unit rank).   
\end{remark}
\section{Outage Probability Analysis}
\label{sec:outage_analysis}
The outage performance characterization of RIS-aided FD/FA systems in the absence of DL is as follows.
The outage probability defined in \eqref{outage_prob} is lower bounded as 
\begin{align}   \rm{P_{out}} \geq \rm{P_{out}^{LB}} = \mathbb{P}\left[\Gamma^{\rm UB}\leq \beta\right] .\label{outage_prob_LB}
\end{align}
    Using Theorem \ref{thm_SNRmax}, the SNR upper bound in the absence of DL becomes $\Gamma^{\textrm{UB}}=\gamma\|\mathbf{E}\|_{1,1}^2$. Thus, we can write 
    \begin{align}   \rm{P_{out}^{LB}}=\mathbb{P}\left[\|\mathbf{E}\|_{1,1}\leq \sqrt{\beta/\gamma}\right].\label{outage_prob_LB_new}
\end{align}
Recall that the above lower bound becomes tighter for a low-rank channel matrix $\mathbf{E}$, which further reduces to equality for $\mathbf{E}$ with rank 1 as highlighted in Remark \ref{remark2}.
Note that $\|\mathbf{E}\|_{1,1}$ is the sum-of-product of Rician random variables, making it difficult to derive outage probability directly. Thus, we first obtain its {\rm MGF} and then use it to determine $\rm{P_{out}^{LB}}$.
\begin{theorem}\label{Theorem_OP}
    The MGF of $\|\mathbf{E}\|_{1,1}$ is 
    \begin{align}
    M(-s)=s^{MN} \left[\int_0^\infty g(h)^M f_{|\mathbf{h}_n|}(h){\rm d}h\right]^N,\label{MGF_gamma_UB}
    \end{align}
    where $g(h)=\mathcal{L}\left(1-Q_1\left(\frac{\kappa_l}{\kappa_n},\frac{x}{|\mathbf{h}_n|\kappa_n}\right)\right)$, $Q_1(\cdot)$ is the Marcum-Q function, $\mathcal{L}(\cdot)$ is Laplace transform (LT), and $f(\cdot)$ is Rician density function.
\end{theorem}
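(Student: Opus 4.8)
The plan is to exploit the product structure $\mathbf{E}=\mathrm{diag}(\mathbf{h})\mathbf{H}$ together with the independence of the underlying fading coefficients, and to reduce everything to a single scalar Laplace transform. First I would write
$\|\mathbf{E}\|_{1,1}=\sum_{n=1}^{N}\sum_{m=1}^{M}|\mathbf{h}_n||\mathbf{H}_{nm}|=\sum_{n=1}^{N}|\mathbf{h}_n|\,\big(\sum_{m=1}^{M}|\mathbf{H}_{nm}|\big)$.
Since $\tilde{\mathbf{h}}$ and the columns of $\tilde{\mathbf{H}}$ are mutually independent standard complex Gaussians, while the LoS terms $\bar{\mathbf{h}}=\mathbf{a}_N(\theta_{\rm rd})$ and $\bar{\mathbf{H}}=\mathbf{a}_N(\theta_{\rm ra})\mathbf{a}_M(\theta_{\rm bd}^i)^T$ are deterministic with unit-modulus entries, each $\mathbf{h}_n$ and each $\mathbf{H}_{nm}$ is a non-central complex Gaussian with non-centrality parameter of modulus one, so $|\mathbf{h}_n|$ and $|\mathbf{H}_{nm}|$ are Rician, and all of them are mutually independent apart from being identically distributed within each family. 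Consequently the $N$ terms $Z_n\triangleq|\mathbf{h}_n|\sum_{m=1}^{M}|\mathbf{H}_{nm}|$ are i.i.d., giving $M(-s)=\mathbb{E}[e^{-s\|\mathbf{E}\|_{1,1}}]=\big(\mathbb{E}[e^{-sZ_1}]\big)^{N}$.

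Next I would condition on $|\mathbf{h}_1|=h$. Given this value, the $M$ magnitudes $|\mathbf{H}_{1m}|$ remain i.i.d.\ Rician and independent of $h$, so $\mathbb{E}[e^{-sZ_1}\mid |\mathbf{h}_1|=h]=\big(\mathbb{E}[e^{-sh|\mathbf{H}_{11}|}]\big)^{M}$, and averaging over the Rician density $f_{|\mathbf{h}_n|}$ yields $\mathbb{E}[e^{-sZ_1}]=\int_0^{\infty}\big(\mathbb{E}[e^{-sh|\mathbf{H}_{11}|}]\big)^{M} f_{|\mathbf{h}_n|}(h)\,{\rm d}h$. The remaining task is therefore to evaluate the scalar quantity $\mathbb{E}[e^{-shR}]$ for a Rician variable $R=|\mathbf{H}_{11}|$, treating $h$ as a parameter.

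The crux is this last step. Since the Rician Laplace transform has no elementary closed form, I would instead integrate by parts on the CDF of $Y=hR$: $\mathbb{E}[e^{-sY}]=\int_0^{\infty}e^{-sy}\,{\rm d}F_Y(y)=\big[e^{-sy}F_Y(y)\big]_0^{\infty}+s\int_0^{\infty}e^{-sy}F_Y(y)\,{\rm d}y=s\,\mathcal{L}\{F_Y\}(s)$, the boundary terms vanishing because $F_Y(0)=0$ and $e^{-sy}\to0$. Writing the Rician CDF through the Marcum-$Q$ function, $F_Y(y)=F_R(y/h)=1-Q_1\!\big(\tfrac{\kappa_l}{\kappa_n},\tfrac{y}{h\kappa_n}\big)$, this is precisely $g(h)=\mathcal{L}\big(1-Q_1(\tfrac{\kappa_l}{\kappa_n},\tfrac{\cdot}{h\kappa_n})\big)$, so $\mathbb{E}[e^{-shR}]=s\,g(h)$. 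Substituting back gives $\mathbb{E}[e^{-sZ_1}]=s^{M}\int_0^{\infty}g(h)^{M}f_{|\mathbf{h}_n|}(h)\,{\rm d}h$, and raising to the $N$th power produces $M(-s)=s^{MN}\big[\int_0^{\infty}g(h)^{M}f_{|\mathbf{h}_n|}(h)\,{\rm d}h\big]^{N}$, which is the claimed identity. The only points requiring care are the independence bookkeeping that lets the two nested products factor cleanly, and matching the arguments of $Q_1(\cdot,\cdot)$ to the Rician normalization induced by $\tilde{\mathbf{h}},\tilde{\mathbf{H}}\sim\mathcal{CN}(0,\mathbf{I})$; the rest is routine.
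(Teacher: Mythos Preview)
Your proposal is correct and follows essentially the same approach as the paper's proof: both decompose $\|\mathbf{E}\|_{1,1}$ into the i.i.d.\ sum $\sum_n |\mathbf{h}_n|\sum_m|\mathbf{H}_{nm}|$, condition on $|\mathbf{h}_n|$ to factor the inner sum as an $M$th power, use the identity $\mathbb{E}[e^{-sY}]=s\,\mathcal{L}\{F_Y\}$ (which the paper phrases as the differentiation property of the Laplace transform and you justify via integration by parts), and substitute the Rician CDF in terms of the Marcum-$Q$ function. Your presentation is slightly more direct in that you condition first and then apply the Laplace/CDF identity to a single scaled Rician variable, whereas the paper applies it to the full $Y_n$ and then factors, but the substance is the same.
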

\begin{IEEEproof} 
Let us define
    \begin{align}
    Y=\|\mathbf{E}\|_{1,1}=\sum\nolimits_{n=1}^N Y_n,\label{G_L1_norm}
\end{align}
where $Y_n=|\mathbf{h}_n|\sum_{m=1}^M|\mathbf{H}_{nm}|$.
     
We begin by writing the MGF of the channel gain $|\mathbf{H}_{nm}|$ by using the differentiation property of LT as  %{\color{red}\cite[Eq. X]{MGF_ref}}
    \begin{align}
        M_{|\mathbf{H}_{nm}|}(-s)&=s\mathcal{L}(F_{|\mathbf{H}_{nm}|}(x)),\nonumber %\label{MGF_Hnm}
    \end{align}
    where $\mathcal{L}(\cdot)$ represents LT and $F_{|\mathbf{H}_{nm}|}(x)=1-Q_1(\frac{\kappa_l}{\kappa_n},\frac{x}{\kappa_n})$ is the cumulative distribution function ({\rm CDF}) of $|\mathbf{H}_{nm}|$.
    Let $X_n=\sum_{m=1}^M|\mathbf{H}_{nm}|$. The $n$-th term of $Y$ given in \eqref{G_L1_norm} becomes $Y_n=|\mathbf{h}_n|X_n$ and thus its CDF is determined as
    \begin{align}
        F_{Y_n}(y)&=\int_0^\infty \int_0^{\frac{y}{|\mathbf{h}_n|}}f_{|\mathbf{h}_n|}(h)f_{X_n}(x){\rm{d}}x{\rm{d}}h,\nonumber\\
        &=\int_0^\infty f_{|\mathbf{h}_n|}(h)F_{X_n}(y/h){\rm{d}}h. \label{CDF_Yn}
    \end{align}
    Thus, the MGF of $Y_n$ can be obtained as    
    \begin{align}
        M_{Y_n}(-s)&~~=s\mathcal{L}\left(F_{Y_n}(y)\right),\nonumber\\
        &~~\stackrel{(a)}{=}\int_0^\infty f_{|\mathbf{h}_n|}(h)  s\mathcal{L}(F_{X_n}(y/h)) {\rm{d}}h,\nonumber\\
        &~~\stackrel{(b)}{=}\int_0^\infty f_{|\mathbf{h}_n|}(h)  \left[s\mathcal{L}\left(1-Q_1(\kappa_l/\kappa_n,y/(h\kappa_n))\right)\right]^{M} {\rm{d}}h. \nonumber %\label{MGF_Yn}
    \end{align}
    where Step (a) follows from using \eqref{CDF_Yn} and the fact that LT is a linear operator, Step (b) follows from $M_{X_n}(-s)=s\mathcal{L}(F_{X_n}(x))$ and 
     using the fact that $\mathbf{H}_{nm}$ are independent and identically distributed (i.i.d.) random variables.
    Since $Y_n$s are i.i.d. random variables, we can obtain the MGF of $Y$ as $ M_Y(s)=M_{Y_n}(s)^N$.
    This completes the proof.
\end{IEEEproof}    
Finally, we evaluate the lower bound on the outage probability $\rm{P_{out}^{LB}}$ by first numerically inverting the MGF of $\|\mathbf{E}\|_{1,1}$ given in Theorem \ref{Theorem_OP} and then using it to evaluate \eqref{outage_prob_LB_new}. For numerical inversion of MGF, please refer to  \cite{NumericalInversion_Alouini}. 
\section{Numerical Results and Discussion}
\setcounter{figure}{0}
\begin{figure*}[t!]
\centering
\resizebox{1\columnwidth}{!}{
\includegraphics[width=.33\columnwidth]{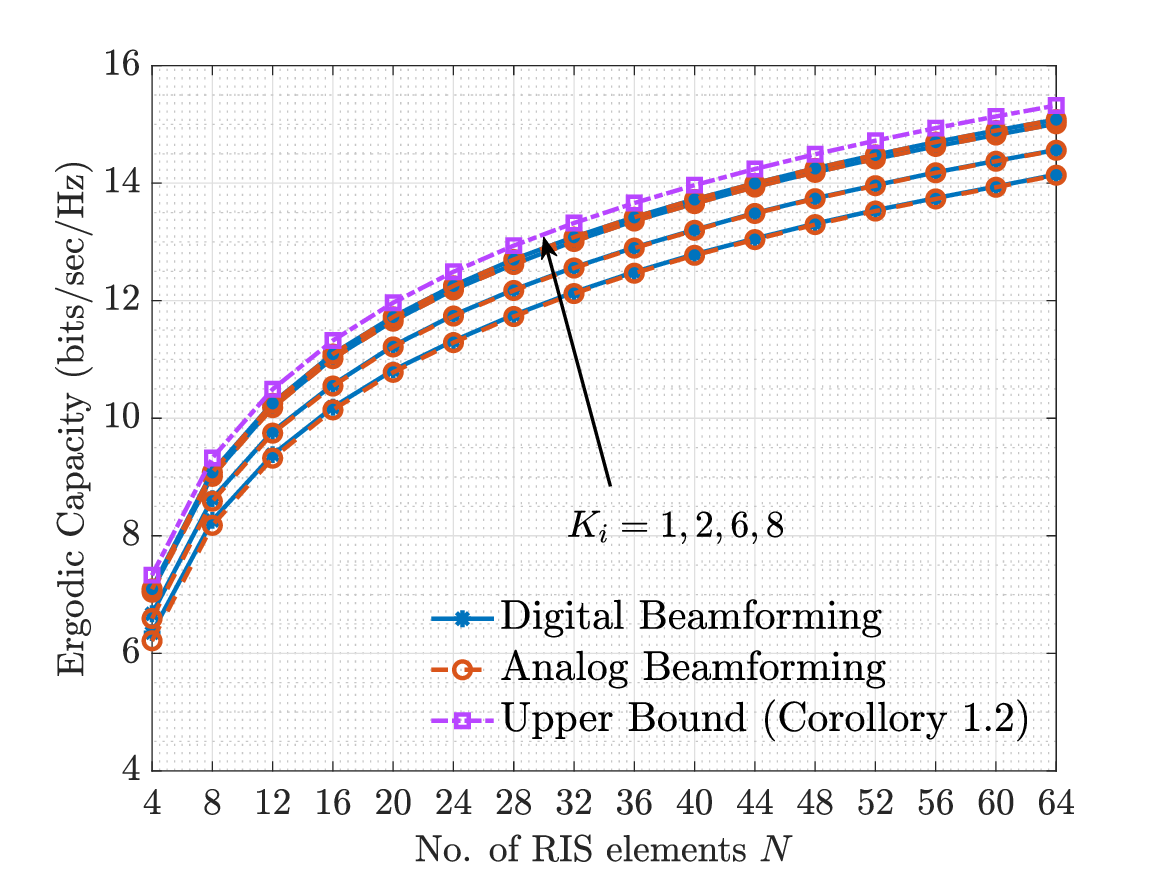} 
\includegraphics[width=.33\columnwidth]{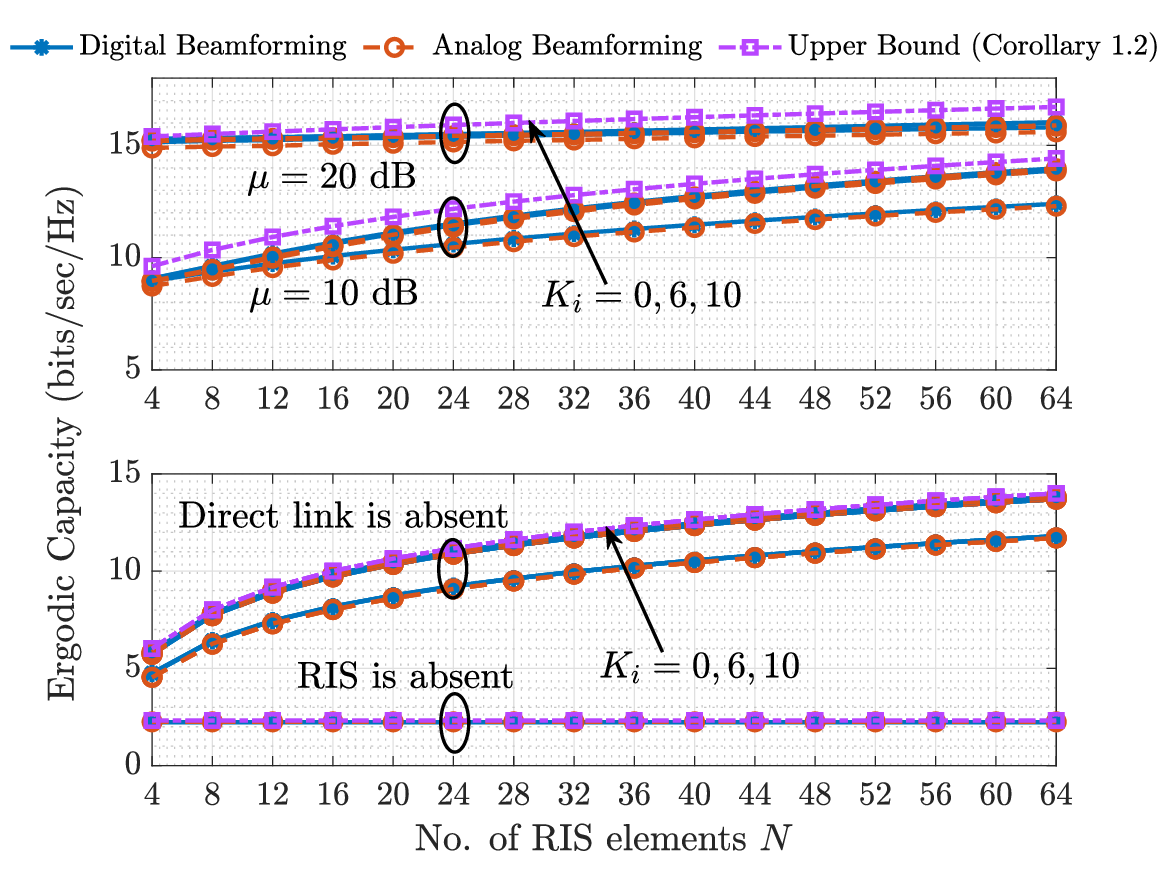}
\includegraphics[width=.33\columnwidth]{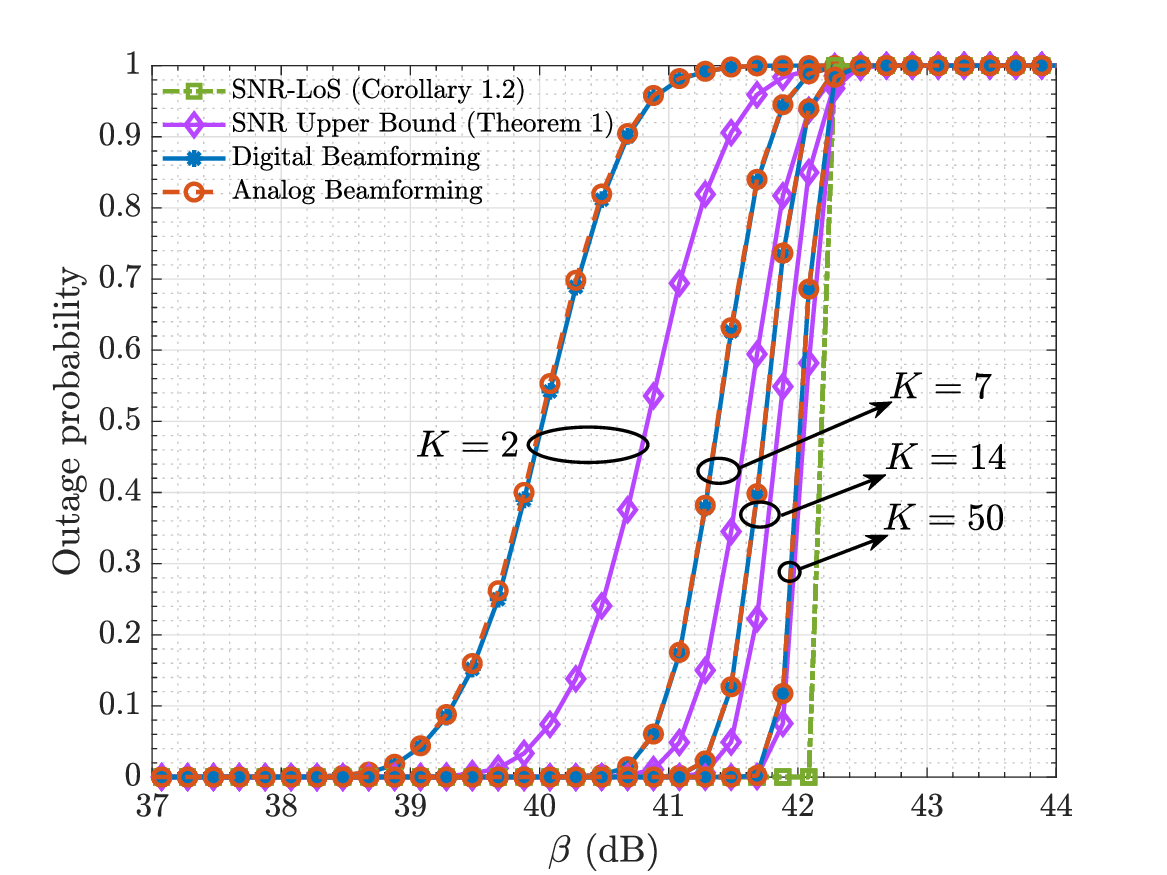}
}
%\caption{Left: Ergodic capacity vs. $N$ with and without direct links, Center: Ergodic capacity vs. $N$ with and without direct-indirect links, Right: Outage probability vs. SNR threshold $\beta~\textrm{(dB)}$.} 
\caption{Capacity vs. $N$  without DL (left),  with either DL or IL (middle bottom), and with both DL and IL (middle top). Outage probability  (right).} 
\label{fig:all}
\end{figure*}
This section presents the numerical analysis of capacity and outage obtained by the proposed algorithms for FD/FA architectures and compares them with the upper bound derived in Theorem \ref{thm_SNRmax}. For numerical analysis, we assume the number of BS antennas $M = 4$, the number of RIS elements $N = 64$, path loss ratio $\mu = 5~\textrm{dB}$, and $\gamma = 1$.
Figure \ref{fig:all} (Left) shows the capacity as a function of $N$ under both FD and FA architectures. It can be observed that the capacity increases with $N$ as well as with the Rician factor $K_i$. This is expected as larger $N$ provides larger array gains and larger $K_i$ provides stronger LoS paths, ensuring better capacities. Further, it can be seen that the capacity is close to the derived upper bound for reasonable values of $K_i$ which further becomes tight with increasing $K_i$. This is because the channel becomes dominated with the LoS component for very large $K_i$, and for the LoS channel, the upper bound reduces to equality as mentioned in Corollary \ref{cor_SNRmax}.
Besides, we can also observe that the AB and DB  perform almost equally as $K_i$ becomes larger.

Figure \ref{fig:all} (Middle) shows the comparison of achievable capacity under three schemes, namely 1) the absence of DL at $\mu=-\infty$, 2) the presence of DL at $\mu=10$ and $20$ dB, and 3) absence of RIS, i.e., IL. We first observe that the derived upper bound is tight in all the schemes, especially when $K_i$ is large. Next, we see that the capacity improves with $\mu$, which strengthens DL, which in turn provides additional spatial diversity to achieve a higher capacity. However, it is noteworthy that the capacity improves slowly with increase in $N$ when $\mu$ is large. This is mainly because, in the presence of strong DL, the improvement in receive SNR due to RIS is not significant. Moreover, as capacity is a logarithmic function of SNR, we observe a saturation in capacity with $N$. 
% {\color{blue} Note on gap increasing w.r.t. $N$ when $\mu$ is large is needed.}
Another interesting observation can be made at large $\mu$ and increasing $N$, which shows the proposed capacity upper bound becoming loose. This is because, the concatenated matrix $\mathbf{G} = [\mathbf{E}^T \mathbf{g}^T]^T$ becomes rank 2 at large values of $\mu$, even in a dominant LoS propagation scenario. This verifies Corollary \ref{cor_SNRmax2} further that the upper bound is exact in the absence of DL under a dominant LoS scenario.   
%\begin{bmatrix}\mathbf{E}\\ \mathbf{g}\end{bmatrix}

Figure \ref{fig:all} (Right) shows the outage performance of the proposed beamforming schemes under FA and FD architectures in the absence of DL at various values of $K_i$. We first observe that the outage performance of both architectures is very close. 
% These performances become closer to the derived outage lower bound (presented in Section \ref{sec:outage_analysis}) as $K_i$ increases.
As $K_i$ increases, these performances increasingly approach the derived outage lower bound presented in Section \ref{sec:outage_analysis}.
It can also be observed that the distributions of exact SNR under both the architectures and the SNR upper bound converge to the deterministic value derived in Corollary \ref{cor_SNRmax2}, i.e. $\Gamma(\mathbf{f}^\textrm{opt},\mathbf{\psi}^\textrm{opt})=\Gamma^{\rm UB}=\gamma MN^2$ as $K_i \to \infty$.  
\section{Conclusion}
This letter investigated optimal beamforming for maximizing the capacity of RIS-aided downlink systems with FD and FA architectures in the presence of Rician faded DL and IL. We first showed that the capacity maximization problem reduces to an $L_1$-norm maximization problem with respect to the transmit beamformer after optimally configuring the RIS. We proposed a complex $L_1$-PCA-based algorithm to obtain the optimal FD beamformer.
% {\color{red}This problem is solved by reformulating the $L_1$ norm maximization objective as maximizing the projection of the beamformer onto unimodular vector space as proposed in \cite{L1Complex_PCA}.} 
%We propose a low-complexity algorithm to obtain the FD beamformer. 
We proposed another algorithm to obtain the optimal FA beamformer with low complexity. Both the proposed algorithms iterate over two closed-form expressions. To characterize the performance of the proposed algorithms, we derived an upper bound on the capacity and analyzed its corresponding outage performance. Specifically, we derived the MGF of the envelope of SNR upper bound, which we then numerically inverted to obtain the outage probability lower bound. Moreover, we analytically established that the proposed bounds on capacity and outage become exact when the channel matrix becomes unit rank, i.e., LoS components of DL and IL are strong and aligned. 
\bibliographystyle{IEEEtran}
\bibliography{Ref}
\end{document}